\documentclass[
]{ceurart}

\usepackage{listings}
\usepackage[ruled]{algorithm}
\usepackage[noend]{algpseudocode}
\usepackage{amsmath,amssymb}

\newtheorem{theorem}{Theorem}[section]
\newtheorem{proposition}[theorem]{Proposition}

\newdefinition{definition}[theorem]{Definition}
\newproof{proof}{Proof}

\usepackage{subcaption}

\newdefinition{example}[theorem]{Example}
\newdefinition{remark}[theorem]{Remark}

\begin{document}

\copyrightyear{2022}
\copyrightclause{Copyright for this paper by its authors.
  Use permitted under Creative Commons License Attribution 4.0
  International (CC BY 4.0).}

\conference{}

\title{Extending CDCL-based Model Enumeration with Weights}


\author[1]{Giuseppe Spallitta}[%
orcid=0000-0002-4321-4995,
email=gs81@rice.edu,
]
\cormark[1]
\address[1]{Rice University, Houston, Texas, United States of America}

\author[1]{Moshe Vardi}[%
orcid=0000-0002-0661-5773,
email=vardi@rice.edu,
]


\cortext[1]{Corresponding author.}

\begin{abstract}
  In this work we investigate \emph{Weighted Model Enumeration} (WME): given a Boolean formula and a weight function over its satisfying assignments, enumerate models while accounting for their weights. This setting supports weight-driven queries, such as producing the top-$k$ models or all models above a threshold. While related to AllSAT, Weighted Model Counting, and MaxSAT, these paradigms do not treat selective enumeration under weights as a native solver task. We present CDCL-based algorithms for WME that integrate weight propagation, weight-based pruning, and \emph{weight-aware} conflict analysis into both chronological and non-chronological backtracking frameworks. Chronological backtracking exploits implicit blocking to keep the clause database compact, thereby reducing the memory footprint and enabling efficient propagation. In contrast, non-chronological backtracking with clause learning supports explicit blocking and restarts. We show that both approaches are feasible and complementary, highlighting trade-offs in pruning effectiveness with weights and clarifying when each performs best. 
\end{abstract}

\begin{keywords}
  AllSAT, Weighted Model Enumeration, Weighted Conflict Analysis
\end{keywords}

\maketitle

\section{Introduction}

In many artificial-intelligence applications, including probabilistic inference \cite{li2011exploiting}, structured decision-making \cite{liu2012belief}, fault diagnosis \cite{cai2017bayesian}, and interpretable machine learning \cite{rudin2022interpretable}, the objective is not merely to obtain a satisfying assignment of a Boolean formula, but to identify a set of "high-quality" models. Such models often correspond to the most probable explanations under a weight function. Classical formulations such as Maximum a Posteriori (MAP) \cite{tolpin2015maximum} and Most Probable Explanation (MPE) \cite{sang2007dynamic,  ChanD12} capture this idea by seeking the single highest-scoring assignment. Yet in practice, many applications require richer outputs: multiple explanations, such as top-$k$ weighted models or all models above a weight threshold, are indispensable for robust decision support and interpretability. This need has long been recognized in database theory through top-$k$ query processing \cite{lawler1972procedure,fagin2001optimal} and in probabilistic inference through top-$k$ explanation tasks \cite{yanover2003finding}.


 Existing CDCL-based reasoning tools fall short of providing native support for weight-aware enumeration of Boolean formulae. AllSAT solvers \cite{toda2016implementing, liang2022allsatcc, gebser2007clasp, spallitta2024disjoint} enumerate indiscriminately, ignoring weights entirely. Weighted Model Counting (WMC) \cite{sang2004combining, oztok2015top, lagniez2017improved, dudek2020dpmc, dilkas2021conditionalwmc} computes global aggregates but provides no access to individual weighted models. MaxSAT solvers \cite{davies2011solving, nadel2024ttopenwboinc, openwboinc2} return a single maximum-weight solution, but lack mechanisms for ranking or filtering multiple alternatives. The 2020 MaxSAT Evaluation introduced a dedicated track for top-$k$ enumeration \cite{bacchus2020maxsat}, highlighting both the recognized importance and the unsolved nature of top-$k$ reasoning. Thus, none of these approaches natively supports selective enumeration under weight constraints. As a result, a natural and practically important task remains unexplored: \emph{Weighted Model Enumeration} (WME).


This work formalizes and develops a CDCL-native framework for WME. Although the problem is naturally connected to both AllSAT and MaxSAT, addressing it efficiently requires solver mechanisms that go beyond a straightforward combination of the two. Beyond the definition, we introduce weight-aware propagation and pruning rules that enable the solver to prune infeasible branches early, and weight-conflict clauses that integrate weight reasoning into CDCL's clause-learning algorithms. These mechanisms reshape the solver's behavior, enabling both threshold-based and top-$k$ enumeration to be performed natively. 

A central issue, then, is how weight conflicts are resolved. In classical SAT solving, non-chronological backtracking and state-of-the-art CDCL techniques are considered indispensable for efficiently scanning the search space \cite{marques2002grasp}, and their benefits have been discussed for enumeration \cite{toda2016implementing}. Yet their advantages may not carry over when weight thresholds guide the search. Conversely, chronological backtracking, recently shown to be effective for model enumeration \cite{spallitta2024disjoint, spallitta2025disjoint}, could align more naturally with WME.
This duality raises the question of which backtracking approach is better suited for weight-aware enumeration, motivating the development and empirical comparison of both chronological and non-chronological CDCL frameworks for WME.

\subsection{Contributions}
\begin{itemize}
\item  We formally introduce \emph{Weighted Model Enumeration}, a generalization of classical reasoning problems, such as AllSAT, WMC, and MaxSAT,  supporting the enumeration of weighted satisfying assignments, possibly under weight-based criteria.
\item We introduce weight-aware propagation and pruning within CDCL-based enumeration, enabling early detection of infeasible branches.
\item  We develop two CDCL frameworks for WME: a \emph{chronological} variant (implicit blocking clauses, low memory footprint, order-sensitive) and a \emph{non-chronological} variant (explicit blocking clauses, restart-friendly).
\item We implement both designs and study their behavior across thresholds and weight distributions, highlighting regimes where one approach outperforms the other.
\end{itemize}

\subsection{Related Work}

\textbf{AllSAT} solvers such as \texttt{clasp}~\cite{gebser2007clasp}, the set of tools \texttt{BC}, \texttt{NBC} and \texttt{BDD}, all three presented in \cite{toda2016implementing}, and, more recently, \texttt{AllSATCC} \cite{liang2022allsatcc} and \texttt{TabularAllSAT}~\cite{spallitta2024disjoint,spallitta2025disjoint} enumerate \emph{all} satisfying assignments without regard to weight. Introducing weighted and filtered results based on weight constraints must be done post hoc, which is computationally prohibitive in large solution spaces.

\textbf{WMC} tools such as \texttt{Ganak} \cite{SRSM19, SM2025}, \texttt{D4} \cite{lagniez2017improved}, \texttt{Cachet} \cite{sang2004combining}, \texttt{miniC2D} \cite{oztok2015top}, \texttt{ADDMC}~\cite{dudek2019addmc}, \texttt{DPMC}~\cite{dudek2020dpmc}, \texttt{sharpSAT-TD} \cite{korhonen2023sharpsat} and algebraic extensions~\cite{dilkas2021conditionalwmc} compute aggregate weights over all models. These are central in probabilistic inference but do not enumerate or rank individual assignments.


\textbf{MaxSAT} solvers including \texttt{MaxHS}~\cite{davies2011solving}, \texttt{Open-WBO}~\cite{martins2014openwbo}, {\tt MPE-SAT} \cite{sang2007dynamic}, and the newer \texttt{Open-WBO-Inc}~\cite{openwboinc2} and \texttt{TT-Open-WBO-Inc}~\cite{nadel2024ttopenwboinc} optimize for a single maximum-weight solution. While multiple models can be obtained via repeated MaxSAT calls, this is not a native ranked-enumeration mechanism and does not scale well.
The MaxSAT Evaluation's experimental top-$k$ track~\cite{bacchus2020maxsat} further highlights that ranked extraction is not a standard MaxSAT objective, as it primarily evaluates correctness (i.e., number and ordering of solutions, possibly with approximation) rather than efficiency. Leading approaches (e.g., RC2 \cite{DBLP:journals/jsat/IgnatievMM19} and MaxHS \cite{davies2011solving}) rely on iterative re-optimization with blocking clauses rather than integrated enumeration and often limit preprocessing, which hinders scalability when extracting multiple solutions.



A possible approach to target WME is to compile the formula into a tractable representation from knowledge compilation~\cite{darwiche2002knowledge}
(e.g., d-DNNF and related decision-diagram formalisms) and then enumerate solutions in ranked order. Recent work started investigating the idea of ranked
enumeration on circuits under restricted scoring schemes~\cite{AmarilliBCM24, MINATO2025467}. This direction could be effective when compilation is feasible and results in a compact representation.

Our focus in this manuscript, however, is different: we target anytime, solver-native top-$k$/threshold enumeration directly within CDCL-style search. This is already a natural fit in the propositional setting, since the weight bounds can be maintained online during the search and used immediately for pruning. The choice is also motivated by how naturally it carries over to SMT. In that setting, a purely propositional compiled representation is not enough: an assignment may satisfy the Boolean abstraction while still being theory-inconsistent. Therefore, a compilation-based approach would need to account not only for propositional structure, but also for the effect of theory reasoning and theory-pruned branches, which is technically nontrivial~\cite{michelutti2024canonical}. By contrast, CDCL($\mathcal{T}$)~\cite{nieuwenhuis2006solving} already combines Boolean search with online theory consistency checking, making the integration of weight reasoning into the search loop more straightforward.

An alternative related line of work is the dynamic-programming optimization tool {\sc DPO} \cite{phan2022dpo}. {\sc DPO} targets optimization on hybrid constraint systems (including XORs) by combining SAT reasoning with dynamic programming over the input problem structure. While {\sc DPO} supports computing an optimal high-weight model, its optimization framework does not allow a trivial modification to enumerate multiple high-weight satisfying models.

\section{Weighted Model Enumeration}

Let $F$ be a CNF Boolean formula over variables $\mathcal{V}=\{A_1,\dots,A_n\}$. A total assignment $\eta:\mathcal{V}\to\{\bot,\top\}$ satisfies $F$ if all clauses of $F$ evaluate to true under $\eta$. A partial assignment $\mu$ assigns only a subset of the variables in $\mathcal{V}$. We represent assignments as sets of literals, e.g.\ $\mu=\{A_1,\neg A_2,A_3\}$, or as conjunctions of literals, e.g.\
$\mu=(A_1\wedge \neg A_2\wedge A_3)$, using the two representations
interchangeably.
In the CDCL loop, $\mu$ denotes the current trail, i.e., an ordered partial assignment partitioned into decision levels. Let $\varepsilon$ denote the empty trail.
We define a \emph{weight function} $w:\mathcal{L}\to\mathbb{Q}_{> 0}$ over the literal set $\mathcal{L}=\{A_i,\neg A_i \mid A_i\in\mathcal{V}\}$.
Finally, let $\mathrm{vars}(\mu)=\{\mathrm{var}(\ell)\mid \ell\in\mu\}$, where $\mathrm{var}(\ell)$ denotes the Boolean variable underlying literal~$\ell$.

\begin{definition}[Weighted Model Enumeration]
    Given a Boolean formula $F$ over variables $\mathcal{V}$ and a literal-level weight function $w: \mathcal{L} \to \mathbb{Q}_{> 0}$, the \emph{Weighted Model Enumeration (WME)} problem asks to compute the set of weighted satisfying assignments:
\[
\mathcal{M}_w(F) = \{ (\eta, w(\eta)) \mid \eta \models F \} \quad\text{where }\quad w(\eta) \;=\; \prod_{\ell \in \eta} w(\ell).
\]
\end{definition}
The goal is to enumerate every model $\eta$ satisfying $F$ together with its aggregated weight $w(\eta)$. The weighted model enumeration problem generalizes:
\begin{itemize}
    \item \textbf{AllSAT}: when $w(\ell) = 1$ for all $\ell$, we get the unweighted set of satisfying assignments;
    \item \textbf{WMC}: when only the scalar quantity $\sum_{\eta \models F} w(\eta)$ is returned, rather than the full set of weighted models.
\end{itemize}

To illustrate the WME task, consider a simple weighted CNF instance over Boolean variables $A_1$, $A_2$, and $A_3$:
\begin{center}
\begin{minipage}[c]{0.42\textwidth}
\[
F = (A_1 \lor A_2) \land A_3
\]
\end{minipage}
\hfill
\begin{minipage}[c]{0.52\textwidth}
\centering
\begin{tabular}{c|ccc}
\toprule
 & $A_1$ & $A_2$ & $A_3$ \\ \midrule
$w(\ell)$ & 0.6 & 0.8 & 0.5 \\
$w(\neg \ell)$ & 0.4 & 0.2 & 0.5 \\ 
\bottomrule
\end{tabular}
\end{minipage}
\end{center}

Table~\ref{tab:wme_example} lists all $8$ possible assignments, indicating which ones satisfy $F$, and reports their corresponding weights for those that do.
\begin{table}[t]
\centering
\begin{tabular}{ccc|c|c|@{\,\qquad\,}|ccc|c|c}
\toprule
\textbf{$A_1$} & \textbf{$A_2$} & \textbf{$A_3$} & \textbf{Sat.\ $F$} & \textbf{Weight} &
\textbf{$A_1$} & \textbf{$A_2$} & \textbf{$A_3$} & \textbf{Sat.\ $F$} & \textbf{Weight} \\
\midrule
0 & 0 & 0 & \texttimes & -- &
1 & 0 & 0 & \texttimes & -- \\
0 & 0 & 1 & \texttimes & -- &
1 & 0 & 1 & \checkmark & $0.6 \cdot 0.2 \cdot 0.5 = 0.06$ \\
0 & 1 & 0 & \texttimes & -- &
1 & 1 & 0 & \texttimes & -- \\
0 & 1 & 1 & \checkmark & $0.4 \cdot 0.8 \cdot 0.5 = 0.16$ &
1 & 1 & 1 & \checkmark & $0.6 \cdot 0.8 \cdot 0.5 = 0.24$ \\
\bottomrule
\end{tabular}
\caption{All possible assignments of $F = (A_1 \lor A_2) \land A_3$ and their associated weights.}
\label{tab:wme_example}
\end{table}

In addition to the baseline WME task, we define two constrained variants:
\begin{itemize}
    \item \textbf{Threshold WME}: Given a threshold $\theta$, enumerate all models $\eta \models F$ such that $w(\eta) \geq \theta$\footnote{We focus on lower-bound thresholds $w(\eta)\ge\theta$ in this paper. Other relational operators (e.g., $>$, $\le$, $<$) can be handled analogously by reversing the comparison and/or using corresponding upper/lower residual bounds.}.
    \item \textbf{Top-$k$ WME}: Enumerate the $k$ models with the highest weights among all models in $F$.
\end{itemize}

These two modes are general enough to express a wide range of practical requirements, including top-$k$ explanation, utility-based model extraction, and probabilistic inference. 

\begin{remark}
Throughout this work, we adopt the multiplicative aggregation of weights,
\[
w(\eta)=\prod_{\ell\in\eta} w(\ell),
\]
as commonly used in Weighted Model Counting and probabilistic reasoning.
While the examples in this paper use weights in~$(0,1]$, in line with the probabilistic intuition common in WMC, the upper-bound reasoning developed later applies over the entire positive range. For numerical stability, one may equivalently work in the log domain by replacing products with sums. This is often preferable in floating-point implementations, as repeated multiplication of many positive weights can easily cause underflow, overflow, or loss of precision.
\end{remark}

\subsection{Comparison to Existing Approaches}


Table~\ref{tab:comparison} positions Weighted Model Enumeration with respect to the main SAT-based paradigms it relates to, namely AllSAT, Weighted Model Counting, and MaxSAT. While these approaches share individual aspects with WME (such as model enumeration, weight handling, or optimization objectives), none of them combine these features within a single solver framework.

AllSAT \cite{toda2016implementing, liang2022allsatcc, spallitta2025disjoint} provides complete model enumeration but lacks any notion of weights or ranking, making selective exploration infeasible without exhaustive enumeration. WMC \cite{dudek2019addmc, dudek2020dpmc, dilkas2021conditionalwmc} supports weights but collapses all models into a single aggregate quantity, offering no access to individual solutions. MaxSAT \cite{martins2014openwbo, nadel2024ttopenwboinc} optimizes for a single best model and does not support ranked or threshold-based enumeration without repeated solver invocations.

In contrast, WME integrates model enumeration, weight-aware search, and weight-based pruning directly into a CDCL-style solver. This enables native support for top-$k$ and threshold-based enumeration with incremental output, allowing entire regions of the search space to be pruned based on weight bounds when needed.

\vspace{-0.5em}
\begin{table}[t!]
\centering
\begin{tabular}{@{}lcccc@{}}
\toprule
\textbf{Feature}             & \textbf{AllSAT} & \textbf{WMC} & \textbf{MaxSAT} & \textbf{WME} \\\midrule
Model enumeration            & \checkmark      & \texttimes   & \texttimes      & \checkmark          \\
Weight support               & \texttimes      & \checkmark   & \checkmark      & \checkmark          \\
Top-$k$/threshold enumeration& \texttimes      & \texttimes   & $\sim$\checkmark\footnotemark & \checkmark \\
\bottomrule
\end{tabular}
\caption{Comparison of WME against similar SAT-based paradigms.}
\label{tab:comparison}
\end{table}
\footnotetext{Some MaxSAT extensions can retrieve multiple near-optimal models via repeated invocations, but this remains an indirect use of MaxSAT rather than a native ranked-enumeration operation.}

\section{Integrating Weight Reasoning into CDCL-Based Frameworks}

Weighted Model Enumeration extends standard CDCL-based enumeration by incorporating Boolean search with weight-based reasoning and threshold constraints, while preserving the overall structure of the CDCL loop. In particular, Boolean constraint propagation and Boolean clause learning remain unchanged.


The key extension is the integration of 
\emph{weight-based conflict analysis} into the CDCL loop. In addition to detecting logical conflicts, the solver continuously evaluates whether the current partial assignment can still satisfy an active weight bound~$\theta$, which may represent either a fixed threshold or a dynamically updated cutoff (e.g., the current $k$-th best score in top-$k$ enumeration). If the maximum achievable weight of a branch falls below~$\theta$, the branch is declared infeasible, and a weight conflict is analyzed, resulting in a learned clause that prunes the corresponding region of the search space. Conceptually, this mechanism plays a role analogous to theory-conflict detection in lazy SMT solvers \cite{barrettSatisfiabilityModuloTheories2021}, but operates over simple weight bounds rather than complex theory constraints.

\subsection{Weight Propagation and Pruning}
\label{subsec:weightprop}

WME extends CDCL search with a numerical dimension: besides logical consistency, the solver must track whether the current partial assignment can still be extended to a model whose weight reaches the active threshold~$\theta$. The goal is to detect as early as possible when this is no longer feasible. This process, called \emph{weight-based pruning}, complements standard Boolean propagation and provides the condition that triggers weight conflicts. Each literal~$\ell$ is associated with a real-valued weight~$w(\ell)$. For every Boolean variable~$A_i$, we define:
\[
\mathit{best}(A_i) = \max\{w(A_i), w(\neg A_i)\}
\]
For each partial assignment~$\mu$, the solver maintains two quantities: the partial weight~$w(\mu)$ on the assigned variables, and an optimistic residual bound~$I_{\max}(\mu)$ on the unassigned variables:
\[
I_{\max}(\mu) \;:=\; \prod_{A \in \mathcal{V} \setminus \mathrm{vars}(\mu)} \mathit{best}(A).
\]
Algorithm~\ref{algo:weights} maintains these quantities incrementally. At initialization (lines~\ref{line:init-wp}--\ref{line:init-imax}), $w(\mu)$ is set to~1, and $I_{\max}(\mu)$ is the product of all $\mathit{best}(A_i)$, corresponding to the assumption that unassigned variables will take their highest-weight polarity. When assigning a literal~$\ell$ (lines~\ref{line:assignfun-start}--\ref{line:assignfun-end}), the solver multiplies $w(\mu)$ by~$w(\ell)$ and removes the corresponding $\mathit{best}(A_i)$ contribution from~$I_{\max}(\mu)$. Unassigning a literal (lines~\ref{line:unassignfun-start}--\ref{line:unassignfun-end}) reverses both updates. 
Both operations preserve the invariant
\[
\forall \eta \supseteq \mu : \quad w(\eta) \le w(\mu) \cdot I_{\max}(\mu).
\]
That is, $w(\mu) \cdot I_{\max}(\mu)$ is an upper bound on the weight achievable by any completion of~$\mu$. For brevity, we write $\mathit{best}(\ell)$ to denote $\mathit{best}(\mathrm{var}(\ell))$.

The pruning test (lines~\ref{line:weightconflict-start}--\ref{line:weightconflict-end}) is invoked after Boolean propagation. If $w(\mu) \cdot I_{\max}(\mu) < \theta$, even an ideal extension cannot reach the lower bound. In this case, we prune the branch, and a weight conflict can be reported\footnote{A symmetric test based on the pessimistic bound $I_{\min}$ could prune branches that exceed an upper bound, though, for clarity, only the lower-bound variant is shown here.}.

\begin{algorithm}[t]
\caption{{\sc Weight-State Maintenance and Pruning}}
\label{algo:weights}
\begin{algorithmic}[1]
\Require Current assignment $\mu$

\State $w(\mu) \gets 1$ \label{line:init-wp}  \Comment partial assignment weight
\State $I_{\max}(\mu) \gets \prod_{A \in \mathcal{V}} \mathit{best}(A)$
  \Comment optimistic residual \label{line:init-imax}

\Statex

\Function{Assign}{$\mu,\ell$} \label{line:assignfun-start}
    \State $\mu$.append($\ell$) 
    \State $w(\mu) \gets w(\mu) \cdot w(\ell)$
    \State $I_{\max}(\mu) \gets I_{\max}(\mu) / \mathit{best}(\ell)$
\EndFunction \label{line:assignfun-end}

\Statex

\Function{Unassign}{$\mu,\ell$} \label{line:unassignfun-start}
    \State $\mu$.remove($\ell$)
    \State $w(\mu) \gets w(\mu) / w(\ell)$
    \State $I_{\max}(\mu) \gets I_{\max}(\mu) \cdot \mathit{best}(\ell)$
\EndFunction \label{line:unassignfun-end}

\Statex

\Function{WeightConflict}{$\mu$} \label{line:weightconflict-start}
    \If{$w(\mu) \cdot I_{\max}(\mu) < \theta$} 
        \State \Return \textbf{true} \Comment prune branch
    \EndIf
    \State \Return \textbf{false}
\EndFunction \label{line:weightconflict-end}

\end{algorithmic}
\end{algorithm}

\begin{proposition}[Weight pruning soundness]
\label{prop:pruning-sound}
Let $\mu$ be any partial trail reachable by the main CDCL loop, and let $w(\mu)$ and $I_{\max}(\mu)$ be maintained as in Algorithm~\ref{algo:weights}.
Then:
\[
\forall \eta \supseteq \mu:\quad w(\eta) \le w(\mu) \cdot I_{\max}(\mu).
\]
In particular, if $\textsc{WeightConflict}(\mu)$ returns \textbf{true}, i.e., $w(\mu) \cdot I_{\max}(\mu) < \theta$, then no completion $\eta \supseteq \mu$ can satisfy $w(\eta) \ge \theta$.
\end{proposition}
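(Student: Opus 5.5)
The plan is to prove two facts separately: an invariant saying the maintained quantities equal their closed forms, and a purely arithmetic bound on any completion. Throughout I assume, as the weight-state maintenance is meant to guarantee, that the trail $\mu$ is consistent and mentions each variable at most once, so that $\mathrm{vars}(\mu)$ and $\mathcal{V}\setminus\mathrm{vars}(\mu)$ partition $\mathcal{V}$.

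\textbf{Step 1 (invariant).} For every trail $\mu$ reachable by the main CDCL loop, I will show by induction on the number of \textsc{Assign}/\textsc{Unassign} operations that
\[
w(\mu)=\prod_{\ell\in\mu} w(\ell)
\quad\text{and}\quad
I_{\max}(\mu)=\prod_{A\in\mathcal{V}\setminus\mathrm{vars}(\mu)}\mathit{best}(A).
\]
\emph{Base case.} At the empty trail $\varepsilon$, lines~\ref{line:init-wp}--\ref{line:init-imax} set both quantities to exactly these values.

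\emph{Inductive step.} \textsc{Assign}$(\mu,\ell)$ appends a literal on an unassigned variable. It multiplies $w(\mu)$ by $w(\ell)$ and removes the single factor $\mathit{best}(\mathrm{var}(\ell))$ from $I_{\max}$. \textsc{Unassign} removes a literal that is on the trail and reverses both updates. Division is always legal because all weights lie in $\mathbb{Q}_{>0}$. Since the weights are exact rationals, the updates are exact and no drift accumulates. Any floating-point or log-domain variant, as in the Remark, would only approximate this.

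\textbf{Step 2 (bound).} Fix a total $\eta\supseteq\mu$ and split its product over the partition of $\mathcal{V}$:
\[
w(\eta)=\prod_{\ell\in\mu}w(\ell)\cdot\prod_{\substack{\ell\in\eta\\ \mathrm{var}(\ell)\notin\mathrm{vars}(\mu)}}w(\ell).
\]
Each literal in the second product satisfies $w(\ell)\le\mathit{best}(\mathrm{var}(\ell))$ by the definition of $\mathit{best}$. All factors are positive, so multiplying these inequalities preserves direction, and the second product is at most $I_{\max}(\mu)$. Combining this with Step 1 gives $w(\eta)\le w(\mu)\cdot I_{\max}(\mu)$.

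\textbf{Consequence.} If \textsc{WeightConflict} returns true, then every completion satisfies $w(\eta)\le w(\mu)\,I_{\max}(\mu)<\theta$, so none meets $w(\eta)\ge\theta$.

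The only real obstacle is Step 1. Its justification depends on the fact that, in every reachable state, the solver calls \textsc{Assign} and \textsc{Unassign} in lockstep with trail modifications, including during backjumps and restarts. I will state this as the invariant that each trail push or pop goes through these two functions.
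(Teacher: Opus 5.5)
Your proposal is correct and takes essentially the same route as the paper: first show that the incremental updates keep $w(\mu)$ and $I_{\max}(\mu)$ equal to their closed-form products, then bound each factor contributed by an unassigned variable by $\mathit{best}$. You are a bit more explicit than the paper, since you also cover \textsc{Unassign} and require that trail updates go through these functions during backjumps and restarts; the pops in Algorithm~\ref{algo:residualbacktrack} also preserve the invariant, because they apply the identical updates.
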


\begin{proof}
We prove that $w(\mu) \cdot I_{\max}(\mu)$ is always an upper bound on the weight of any extension of the current trail~$\mu$.

Initially $\mu = \varepsilon$, $w(\mu) = 1$, and $I_{\max}(\mu) = \prod_{A\in \mathcal{V}}\mathit{best}(A)$, so the bound holds trivially. Whenever a literal $\ell$ of variable $A$ is assigned, Algorithm~\ref{algo:weights} updates
$w(\mu) \leftarrow w(\mu) \cdot w(\ell)$ and $I_{\max}(\mu) \leftarrow I_{\max}(\mu) / \mathit{best}(\ell)$, thus maintaining:
\[
w(\mu) = \prod_{\ell\in\mu} w(\ell)
\quad\text{and}\quad
I_{\max}(\mu) = \prod_{B\in \mathcal{V}\setminus \mathrm{vars}(\mu)} \mathit{best}(B).
\]
Let $\eta \supseteq \mu$ be any completion. For every unassigned variable $B$, the literal $\ell_B$ chosen in $\eta$ satisfies $w(\ell_B) \le \mathit{best}(B)$ by definition. Thus:
\[
w(\eta)
= \prod_{\ell\in\mu} w(\ell)\cdot \prod_{B\in \mathcal{V}\setminus \mathrm{vars}(\mu)} w(\ell_B)
\le \prod_{\ell\in\mu} w(\ell)\cdot \prod_{B\in \mathcal{V}\setminus \mathrm{vars}(\mu)} \mathit{best}(B)
= w(\mu) \cdot I_{\max}(\mu).
\]
Hence $w(\mu) \cdot I_{\max}(\mu)$ is a valid upper bound for all extensions of $\mu$, and if $w(\mu) \cdot I_{\max}(\mu) < \theta$ then $w(\eta) < \theta$ for every $\eta\supseteq\mu$, establishing the correctness of pruning.
\end{proof}

\begin{example}
    Consider the following weighted instance, where $\phi$ is a subformula  
    over the set of variables $B_1, \cdots B_n$:
\begin{center}
\begin{minipage}[c]{0.44\textwidth}
\[
F = (A_1 \lor \phi) \land A_2 .
\]
\end{minipage}
\hfill
\begin{minipage}[c]{0.52\textwidth}
\centering
\begin{tabular}{c|@{\quad}ccc}
\toprule
 & $A_1$ & $A_2$ & $B_i \in \phi$ \\
\midrule
$w(\ell)$        & 0.6 & 0.3 & $\beta_i \in (0,1)$ \\
$w(\neg \ell)$   & 0.4 & 0.7 & $1 - \beta_i$ \\
\bottomrule
\end{tabular}
\end{minipage}
\end{center}
Suppose we enforce a lower-bound threshold $\theta = 0.2$.
Initially,
\begin{align}
I_{\max}(\mu) &= \max(0.4,0.6)\ \cdot \max(0.7,0.3) \cdot 
     \underbrace{\prod_{i=1}^n best(B_i)}_{\alpha} 
  = 0.42\,\alpha. \label{eq:imax}
\end{align}

Notice that due to the weight definition, $\alpha$ is guaranteed to be in the range $(0,1)$. Now suppose the trail is $\mu = \{\neg A_1, A_2\}$. Then $w(\mu) = 0.4 \cdot 0.3 = 0.12$.
We divide $best(A_1)$ and $best(A_2)$
from $I_{\max}(\mu)$, leaving the residual:
\[
I_{\max}(\mu)= \frac{I_{\max}(\mu)}{best(A_1) \cdot best(A_2)} = \frac{0.6 \cdot 0.7 \cdot \alpha}{0.6 \cdot 0.7} = \alpha
\].
Hence, for every extension of the partial assignment (including the optimistic completion):
\[
w(\mu) \cdot I_{\max}(\mu) \;=\; 0.12 \cdot \alpha \;\leq\; 0.12 \;<\; \theta \;=\; 0.2,
\]
Therefore, this branch can be pruned early, avoiding the assignment of any variables in $\phi$.
\end{example}



\subsection{Weight Conflict Analysis}
\label{subsec:weightconflict}

The previous section defines how a partial trail maintains an upper bound on
the achievable model weight and how branches that cannot reach the target
threshold~$\theta$ are pruned. When such pruning occurs, we interpret it as a
\emph{weight conflict}, analogous to a Boolean conflict in standard CDCL. 
The goal of weight-conflict analysis is to extract a short explanation from the current trail and learn a clause that prevents revisiting the same infeasible region.

\begin{definition}[Weight conflict]
Given a CNF formula $F$, a threshold~$\theta$, and a partial trail~$\mu$,
we say that $\mu$ induces a \textbf{weight conflict} if every total extension
$\eta \supseteq \mu$ fails to reach the threshold, i.e.,
\[
\forall \eta \supseteq \mu:\; w(\eta) < \theta.
\]
By Proposition~\ref{prop:pruning-sound}, $w(\mu) \cdot I_{\max}(\mu) <
\theta$ provides an efficiently checkable sufficient condition: if even the
optimistic upper bound on any extension of~$\mu$ is below~$\theta$, the branch
can be pruned.
\end{definition}

When a weight conflict is detected (line~\ref{line:weightconflict-start} of
Algorithm~\ref{algo:weights}), the solver triggers weight-conflict analysis (Algorithm \ref{algo:analyze-weight-conflict}).
The procedure first extracts a \emph{weight-conflict set} $S \subseteq \mu$ and
then learns the corresponding \emph{weight-conflict clause}
\[
C_w := \bigvee_{\ell \in S} \neg \ell,
\]
which blocks the simultaneous reoccurrence of all literals in~$S$.

\begin{algorithm}[t]
\caption{{\sc AnalyzeWeightConflict}$(\mu)$}
\label{algo:analyze-weight-conflict}
\begin{algorithmic}[1]
\Require Current trail $\mu$;\; threshold $\theta$
\State $S \gets$ \textsc{GreedyConflictSet}$(\mu, \theta)$
\State $C_w \gets \bigvee_{\ell \in S} \neg \ell$
\State \textsc{Learn}$(C_w)$ \Comment add learned weight-conflict clause
\State \textsc{Backjump}$(C_w)$ \Comment backjump per CDCL policy
\end{algorithmic}
\end{algorithm}

The extraction of the weight conflict set is shown in Algorithm~\ref{algo:greedyweight}. We use a greedy extraction strategy because it is easy to integrate, incurs little overhead, and provides a practical trade-off between clause compactness and extraction cost. Literals in the current trail~$\mu$ are sorted by ascending weight
$w(\ell)$ (line~\ref{line:gw-sort}). The procedure then greedily builds a set
$S \subseteq \mu$ (lines~\ref{line:gw-loop-start}--\ref{line:gw-loop-end}),
while maintaining ($i$) the partial weight
$w(S)=\prod_{\ell\in S} w(\ell)$ and ($ii$) an optimistic residual bound $I_{\max}(S)$ for
the variables not yet fixed by~$S$. 
As soon as $w(S) \cdot I_{\max}(S) < \theta$ (line~\ref{line:gw-threshold}), the
set $S$ is used to generate $C_w$. 

While stronger conflict-set extraction procedures could be considered, e.g., by minimizing the learned weight-conflict clause, we use a lightweight greedy strategy to keep the extraction cost negligible within the CDCL loop,  following prior work that prioritizes low-overhead reasoning during search \cite{spallitta2024disjoint}. More compact or cost-aware weight-conflict explanations are orthogonal to our framework and left as future work.

\begin{algorithm}[t]
\caption{{\sc GreedyConflictSet}$(\mu,\theta)$}
\label{algo:greedyweight}
\begin{algorithmic}[1]
\Require Current trail $\mu$;\; threshold $\theta$
\State $S \gets \emptyset$
\State $w(S) \gets 1$
\State $I_{\max}(S) \gets \prod_{A\in\mathcal V}\mathit{best}(A)$
\State $L \gets$ literals of $\mu$ sorted by ascending $w(\ell)$ \label{line:gw-sort}
\For{each $\ell \in L$} \label{line:gw-loop-start}
   \State $S \gets S \cup \{\ell\}$
   \State $w(S) \gets w(S) \cdot w(\ell)$
   \State $I_{\max}(S) \gets I_{\max}(S) / \mathit{best}(\ell)$
   \If{$w(S) \cdot I_{\max}(S) < \theta$} \label{line:gw-threshold}
      \State \Return $S$ \label{line:gw-return}
   \EndIf
\EndFor \label{line:gw-loop-end}
\State \Return $S$ \label{line:gw-full} \Comment{All literals in $\mu$ needed}
\end{algorithmic}
\end{algorithm}

\begin{proposition}[Soundness of greedy weight-conflict learning]
\label{prop:greedy-sound}
Let $\mu$ be a trail such that $\textsc{WeightConflict}(\mu)$ holds.
Let $S\subseteq \mu$ be the set returned by \textsc{GreedyConflictSet}$(\mu,\theta)$,
and let $C_w$ be the learned weight-conflict clause.
Assume that when the procedure returns $S$ it satisfies
$w(S) \cdot I_{\max}(S) < \theta$, where
$w(S)=\prod_{\ell\in S} w(\ell)$ and
$I_{\max}(S)=\prod_{A\in \mathcal{V}\setminus \mathrm{vars}(S)} \mathit{best}(A)$
are maintained as in Algorithm~\ref{algo:greedyweight}.
Then every total model $\eta$ of $F$ with $w(\eta)\ge \theta$ satisfies $C_w$.
Consequently, learning $C_w$ cannot eliminate any model
of weight at least~$\theta$.
\end{proposition}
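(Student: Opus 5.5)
The plan is to argue by contraposition: we show that any total assignment $\eta$ falsifying $C_w$ has weight strictly below $\theta$. The bound is obtained by applying Proposition~\ref{prop:pruning-sound} with $S$ in the role of the partial assignment.

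First, we unpack what falsifying $C_w$ means. Since $C_w=\bigvee_{\ell\in S}\neg\ell$, a total assignment $\eta$ falsifies $C_w$ exactly when every literal of $S$ is true in $\eta$, i.e.\ $S\subseteq\eta$. Because $S\subseteq\mu$ and $\mu$ is a consistent trail, $S$ contains at most one literal per variable. Hence $S$ is itself a legitimate partial assignment with $\mathrm{vars}(S)$ well defined.

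Second, we verify that the quantities maintained by Algorithm~\ref{algo:greedyweight} have the closed forms stated in the hypothesis. This follows from the same loop invariant used in the proof of Proposition~\ref{prop:pruning-sound}. Initially $w(S)=1$ and $I_{\max}(S)=\prod_{A\in\mathcal V}\mathit{best}(A)$. Each iteration multiplies in $w(\ell)$ and divides out $\mathit{best}(\ell)$ for a variable not yet in $\mathrm{vars}(S)$; this uses consistency of $\mu$ again, so no variable is divided out twice.

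Third, we repeat the completion argument of Proposition~\ref{prop:pruning-sound} verbatim with $S$ in place of $\mu$. For any total $\eta\supseteq S$,
\[
w(\eta)=\prod_{\ell\in S}w(\ell)\cdot\prod_{B\in\mathcal V\setminus\mathrm{vars}(S)}w(\ell_B)\le w(S)\cdot I_{\max}(S)<\theta .
\]
The first inequality holds because $w(\ell_B)\le\mathit{best}(B)$ for every $B$; the strict inequality is the assumed return condition. Hence every $\eta$ falsifying $C_w$ has $w(\eta)<\theta$, and contrapositively every model of weight at least $\theta$ satisfies $C_w$. This holds for all total assignments, not only models of $F$, so learning $C_w$ removes no model of $F$ of weight at least $\theta$.

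The main obstacle is justifying the assumption that the returned $S$ satisfies the bound, in particular on the fall-through return at line~\ref{line:gw-full}. There $S=\mu$ as a set, so $w(S)\cdot I_{\max}(S)=w(\mu)\cdot I_{\max}(\mu)$, and this is below $\theta$ by the hypothesis that $\textsc{WeightConflict}(\mu)$ holds. The point is that the bound for $S$ depends only on $S$ as a set, not on the sorted order in which the greedy loop added its literals; I would state this explicitly. Everything else is a direct reuse of Proposition~\ref{prop:pruning-sound}.
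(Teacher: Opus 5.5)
Your proposal is correct and follows the paper's proof. Like the paper, you observe that any $\eta$ falsifying $C_w$ contains $S$, and then bound $w(\eta)\le w(S)\cdot I_{\max}(S)<\theta$ using the same per-variable inequality $w(\ell_B)\le\mathit{best}(B)$. Your extra justification of the return condition (consistency of $S$, the fall-through case) is harmless but not needed, since the statement assumes that condition as a hypothesis. Moreover, in exact arithmetic the fall-through line is reached only when $\mu$ is empty, because the last loop iteration already tests $w(\mu)\cdot I_{\max}(\mu)<\theta$.
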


\begin{proof}
Let $S$ be the set returned by \textsc{GreedyConflictSet}. By assumption, at
return time the procedure has $w(S) \cdot I_{\max}(S) < \theta$.

We first show that this implies that every completion of $S$ is below the
threshold. Let $\eta \supseteq S$ be any total assignment containing all literals in $S$. For each
variable $A \notin \mathrm{vars}(S)$, let $\ell_A$ be the literal on $A$ chosen
by $\eta$. By definition of $\mathit{best}$ we have $w(\ell_A) \le \mathit{best}(A)$,
hence
\[
w(\eta)
= \Big(\prod_{\ell\in S} w(\ell)\Big)\cdot
  \Big(\prod_{A\in \mathcal{V}\setminus \mathrm{vars}(S)} w(\ell_A)\Big)
\le w(S) \cdot
   \Big(\prod_{A\in \mathcal{V}\setminus \mathrm{vars}(S)} \mathit{best}(A)\Big)
= w(S) \cdot I_{\max}(S)
< \theta.
\]
Therefore:
\begin{equation}
\label{eq:proofer}
    \forall \eta \supseteq S:\quad w(\eta) < \theta.
\end{equation}

Now consider any total assignment $\eta$ of $F$ that falsifies $C_w=\bigvee_{\ell\in S}\neg\ell$.
Falsifying $C_w$ means $\ell$ is true in $\eta$ for all $\ell\in S$, hence
$S \subseteq \eta$. Applying Equation \ref{eq:proofer} returns $w(\eta) < \theta$.

Thus, no assignment of weight at least $\theta$ can falsify $C_w$.
Equivalently, every model $\eta$ with $w(\eta)\ge \theta$ satisfies $C_w$, so
learning $C_w$ cannot eliminate any model of weight at least~$\theta$.
\end{proof}

\begin{example}
Consider the following weighted CNF instance:
\begin{center}
\begin{minipage}[c]{0.44\textwidth}
\[
F = (A_1 \lor A_2 \lor \neg A_3) .
\]
\end{minipage}
\hfill
\begin{minipage}[c]{0.52\textwidth}
\centering
\begin{tabular}{c|@{\quad}ccc}
\toprule
 & $A_1$ & $A_2$ & $A_3$ \\
\midrule
$w(\ell)$        & 0.6 & 0.8 & 0.7 \\
$w(\neg \ell)$   & 0.4 & 0.2 & 0.3 \\
\bottomrule
\end{tabular}
\end{minipage}
\end{center}



Suppose the lower-bound threshold is $\theta=0.2$ and the current trail is $\mu = \{\neg A_1,\, A_2,\, \neg A_3\}$. The partial weight is $w(\mu) = 0.4 \cdot 0.8 \cdot 0.3 = 0.096$. Since $w(\mu) < \theta$, this branch is pruned. The greedy strategy identifies the literal with the lowest contribution (here, $\neg A_3$) as the main cause of the underweight condition (since $w(\neg A_3) \cdot best(A_1) \cdot best(A_2) = 0.144 < 0.2$), returning the weight conflict set $S = \{\neg A_3\}$. The corresponding weight conflict blocking clause $C_w = A_3$ is then learned.
\end{example}

\subsection{Residual-Aware Backtracking for Top-$k$ Enumeration}
\label{subsec:residualbacktrack}

In top-$k$ enumeration, whenever we find a satisfying assignment whose weight enters the current top-$k$, we tighten the active threshold~$\theta$ accordingly. After such an update, continuing the search from the last decision level is often pointless: the current trail~$\mu$ may already be so constrained that, even under the most optimistic completion of the remaining variables, no model strictly above~$\theta$ can be reached.

Algorithm~\ref{algo:residualbacktrack} uses the same optimistic bound from Algorithm~\ref{algo:weights} to determine how far to backtrack immediately after updating~$\theta$. Starting from the current trail~$\mu$, the procedure repeatedly removes the most recent literal, updating $w(\mu)$ and $I_{\max}(\mu)$ according to the incremental rules of Algorithm~\ref{algo:weights}. After each removal, the quantity $w(\mu)\cdot I_{\max}(\mu)$ remains an upper bound on the weight of any completion of the current trail. Any trail satisfying $w(\mu)\cdot I_{\max}(\mu)\le \theta$ cannot lead to an improved model, so we continue backtracking.

The procedure stops as soon as $w(\mu)\cdot I_{\max}(\mu)>\theta$, because some completion may still improve on the current cutoff. Thus, the solver can resume the search from the corresponding decision level by exploring the alternative branch. If the procedure backtracks to decision level~0 and the bound is still not above~$\theta$, then no model of weight greater than~$\theta$ exists, and top-$k$ enumeration can terminate.

\begin{algorithm}[t]
\caption{{\sc Residual-Aware Backtrack}$(\mu, \theta)$}
\label{algo:residualbacktrack}
\begin{algorithmic}[1]
\Require Current trail $\mu$;\; maintained $w(\mu)$, $I_{\max}(\mu)$;\; threshold $\theta$
\While{$\mu$ contains a literal at decision level $>0$} \label{line:rb-start}
    \State $\ell \gets \mu.\mathrm{pop}()$ \label{line:rb-pop}
    \State $w(\mu) \gets w(\mu) / w(\ell)$ \label{line:rb-updatew}
    \State $I_{\max}(\mu) \gets I_{\max}(\mu) \cdot \mathit{best}(\ell)$ \label{line:rb-updateimax}
    \If{$w(\mu) \cdot I_{\max}(\mu) > \theta$} \label{line:rb-check}
        \State \Return $\mu$ \label{line:rb-return} \Comment resume search from this trail and decision level
    \EndIf
\EndWhile
\State \Return $\mu$ \label{line:rb-root} \Comment return the decision level-0 trail and stop enumeration
\end{algorithmic}
\end{algorithm}

\begin{example}
Consider the following weighted CNF instance:
\begin{center}
\begin{minipage}[c]{0.44\textwidth}
\[
F = (A_1 \lor A_2) \land A_3 .
\]
\end{minipage}
\hfill
\begin{minipage}[c]{0.52\textwidth}
\centering
\begin{tabular}{c|@{\quad}ccc}
\toprule
 & $A_1$ & $A_2$ & $A_3$ \\
\midrule
$w(\ell)$        & 0.6 & 0.8 & 0.7 \\
$w(\neg \ell)$   & 0.4 & 0.2 & 0.3 \\
\bottomrule
\end{tabular}
\end{minipage}
\end{center}

The assignment $\eta=\{\neg A_1, A_2, A_3\}$ satisfies $F$ with $w({\eta}) = 0.4 \cdot 0.8 \cdot 0.7 = 0.224$. We assume the CDCL trail has the same order.
Assuming we are looking for the top-1 solution, the threshold $\theta$ is updated accordingly.

Removing $A_3$ with $\neg A_1, A_2$ fixed results in:
\[
\begin{array}{lcllcl}
(A_3=\top): & 0.4 \cdot 0.8 \cdot 0.7 = 0.224; &
(A_3=\bot): & 0.4 \cdot 0.8 \cdot 0.3 = 0.096.
\end{array}
\]
No extension of the partial assignment can be a new top-1 solution, so we can prune $A_3$.

Removing $A_2$ with only $\neg A_1$ fixed results in:
\[
\begin{array}{lcl lcl}
(A_2{=}\top, A_3{=}\top): & 0.4 \cdot 0.8 \cdot 0.7 = 0.224; &
(A_2{=}\bot, A_3{=}\top): & 0.4 \cdot 0.2 \cdot 0.7 = 0.056;\\
(A_2{=}\top, A_3{=}\bot): & 0.4 \cdot 0.8 \cdot 0.3 = 0.096; &
(A_2{=}\bot, A_3{=}\bot): & 0.4 \cdot 0.2 \cdot 0.3 = 0.024.
\end{array}
\]

None is above $\theta$, so $A_2$ can be safely removed. At this point, the remaining trail is $\mu=\{\neg A_1\}$. Its optimistic residual bound is $I_{\max}(\mu) = 0.6 \cdot 0.8 \cdot 0.7 = 0.336,$
which is $> \theta$, so there is a chance that an assignment with a higher score satisfying $F$ does exist. The solver backtracks and flips $A_1$. 
\end{example}



\subsection{Weight-based Variable Priority for Top-$k$ Enumeration}
\label{subsec:weight-priority}

Many real-world problems are not naturally given in CNF and are first encoded using structure-preserving encodings such as Tseitin~\cite{Tseitin1968} or Plaisted--Greenbaum~\cite{PlaistedGreenbaum1986}. The newly introduced variables in these encodings (and, more generally, any variable whose two polarities contribute the same weight) are often \emph{weight-irrelevant}: they do not affect model ranking under a weight-based threshold.

Let $\mathcal{V}$ be the full variable set. We partition $\mathcal{V}$ into \emph{weight-relevant} variables $W_r$ and \emph{weight-irrelevant} variables $W_i$.
Accordingly, any assignment $\eta$ can be decomposed as $\eta = \eta_r \cup \eta_i$ with $\eta_r$ over $W_r$ and $\eta_i$ over $W_i$. The contribution of $W_i$ is constant, denoted $\theta_i$, so we maintain a lower-bound threshold $\theta_r$
on the restricted score over $W_r$ only.
The branching heuristic prioritizes variables in $W_r$ so that weight-relevant variables are assigned first.

For any assignment $\eta_r$ over $W_r$, the assignment weight is defined such that:
\begin{equation*}
\forall\ \eta_i\ \text{over}\ W_i:\quad
w(\eta_r \wedge \eta_i) = w(\eta_r) \cdot \theta_i    
\end{equation*}    
Once all variables in $W_r$ have been assigned, if $w(\eta_r)<\theta_r$, then no extension over $W_i$ can reach the current threshold. During partial search over $W_r$, the same reasoning applies using an optimistic residual bound over the still-unassigned variables in $W_r$.

If all variables in $W_r$ are assigned and $w(\eta_r) \ge \theta_r$, the solver checks extendability to a total satisfying assignment including variables in $W_i$. 
When this succeeds, $\theta_r$ is updated accordingly, and residual-aware backtracking is applied with respect to the last decision on a variable in~$W_r$; assignments on $W_i$ made afterward cannot improve the score.
Conversely, if the optimistic bound over the remaining variables in $W_r$ drops below $\theta_r$, then variables in $W_i$ are skipped, and weight-based conflict analysis is triggered.

\begin{example}
Consider the following weighted CNF instance:
\begin{center}
\begin{minipage}[c]{0.44\textwidth}
\[
F = (A_1 \lor A_2)\ \land\ A_3\ \land\ (\neg Z_1 \lor A_1)\ \land\ (\neg Z_2 \lor A_2),
\]
\[
W_r = \{A_1, A_2, A_3\}, \qquad
W_i = \{Z_1, Z_2\}.
\]
\end{minipage}
\hfill
\begin{minipage}[c]{0.52\textwidth}
\centering
\begin{tabular}{c@{\quad}|ccccc}
\toprule
 & $A_1$ & $A_2$ & $A_3$ & $Z_1$ & $Z_2$ \\
\midrule
$w(\ell)$        & 0.9 & 0.8 & 0.7 & 0.5 & 0.5 \\
$w(\neg \ell)$   & 0.1 & 0.2 & 0.3 & 0.5 & 0.5 \\
\bottomrule
\end{tabular}
\end{minipage}
\end{center}

Since both polarities of each $Z_j$ carry the same weight, every assignment on $W_i$ contributes the same factor $\theta_i = 0.25$, so $W_i$ is weight-irrelevant.
We maintain a lower-bound threshold $\theta_r$ over $W_r$
and prioritize decisions on $W_r$.
Assume $\theta_r = 0.1$.

\noindent\textbf{Case 1:}
Suppose the trail on $W_r$ is
$\eta_r = \{A_1,\, \neg A_2,\, A_3\}$.
Then
\[
w(\eta_r) = 0.9 \cdot 0.2 \cdot 0.7 = 0.126.
\]
Since $w(\eta_r)\ge \theta_r$, we check extendability to a total
model satisfying $F$ (e.g., $Z_1=Z_2=\bot$ works).
Any completion over $W_i$ preserves the relative ranking of two assignments over $W_r$,
so if $w(\mu_{1_r}) > w(\mu_{2_r})$ then
\[
w(\eta_1) = w(\eta_{1_r}) \cdot \theta_i
  > w(\eta_{2_r}) \cdot \theta_i = w(\eta_2).
\]
After validating satisfiability, $\theta_r$ is updated with the full score
and residual-aware backtracking resumes from the last decision in $W_r$;
assignments to $(Z_1,Z_2)$ are irrelevant for the score.

\noindent\textbf{Case 2:}
Suppose instead we reach
$\eta_r' = \{\neg A_1, \neg A_2, A_3\}$,
giving
\[
w(\eta_r') = 0.1 \cdot 0.2 \cdot 0.7 = 0.014.
\]
Since $w(\eta_r') < \theta_r$, the solver can skip assigning $Z_1,Z_2$
and trigger a weight conflict.
\end{example}

In the top-$k$ setting, two total assignments may share the same total weight such that: ($i$) they have the same assignment on $W_r$, and ($ii$) different assignments on $W_i$. To avoid pruning such tie-equivalent solutions too early, we activate the optimization only after the first $k$ satisfying assignments have been collected. 
From that point onward, only assignments whose best achievable score is strictly better than the current $k$-th score can affect the maintained top-$k$ set.


\section{Chronological vs. Non-Chronological Backtracking}

Integrating weight reasoning into CDCL-based enumeration gives rise to two distinct solver designs, depending on how conflicts (both Boolean and weight-based) are resolved. The first adopts the classical non-chronological
clause-learning paradigm of modern SAT solvers \cite{toda2016implementing, liang2022allsatcc}.
The second combines CDCL with \emph{chronological backtracking} as used in recent AllSAT engines \cite{toda2016implementing, spallitta2025disjoint}. 
These two strategies have been explored in the non-weighted setting
\cite{spallitta2024disjoint, spallitta2025disjoint}, but weight-based pruning could accentuate their trade-offs in the weighted case. We emphasize that the distinction between chronological and non-chronological backtracking is well established in prior work \cite{toda2016implementing, mohle2019combining, spallitta2024disjoint}. Rather than redefining these notions, we focus on the different implementation requirements that arise when they are instantiated in AllSAT-style enumeration, in particular with respect to model blocking and exhaustive traversal.

\subsection{Chronological Backtracking Framework}

In the chronological framework, conflicts trigger chronological backtracking, but the solvers still incorporate CDCL-style conflict analysis \cite{mohle2019combining, spallitta2024disjoint}. To maintain chronological traversal, analysis is carried out up to the last unique implication point (UIP), instead of performing the usual first-UIP backjumping \cite{spallitta2024disjoint}. Blocking previously explored models is \emph{implicit}: once a branch is closed, the chronological traversal ensures it will not be revisited \cite{mohle2019combining}. 

The main advantage of this approach is its low memory footprint and fast propagation. Nevertheless, the performance of this design depends heavily on decision order. Early poor branching choices can delay the discovery of high-weight models, reducing the effectiveness of pruning. 
Restarts would break the
implicit blocking guarantee, so they must be disabled
\cite{spallitta2024disjoint}, which further increases sensitivity to branching heuristics. In practice, this variant performs best when the solution space is dense, such that implicit blocking remains efficient.

\subsection{Non-Chronological Backtracking Framework}

In the non-chronological framework, the solver backjumps to the first UIP rather than the last, thereby returning shorter clauses and enabling more effective backjumping \cite{biere2009handbook}. Model blocking is also \emph{explicit}: whenever a complete model is found, a clause consisting of the negation of all its decision literals is added to exclude it in subsequent search. These blocking clauses persist across restarts, enabling the solver to reset the trail while retaining information about previously explored or infeasible regions.

This approach is more robust to decision order, since restarts allow the solver to dynamically reorganize the search. Even if early decisions move the search away from promising regions, accumulated blocking and weight-conflict clauses guide the search toward high-weight models over time. The trade-off is that propagation becomes increasingly expensive as clauses accumulate, and memory usage may quickly increase.


\section{Experimental Evaluation}

We empirically evaluate the proposed WME framework on both synthetic and
real-world benchmarks. Two tasks are considered:
\emph{top-$k$} and \emph{threshold-based} enumeration. All experiments were conducted on a high-performance computing cluster. Each
run was bound to a single core of an Intel Xeon E5-2650 v2
@ 2.60GHz with 25~GB RAM. We implemented two CDCL-based WME variants, available at \url{https://github.com/giuspek/WME}, atop the
same \texttt{CaDiCaL}~\cite{biere2024cadical} codebase\footnote{Weights are represented using arbitrary-precision floating-point numbers in the implementation.}:
\begin{itemize}
    \item \texttt{WME-NCB:} Weighted model enumeration with
    \emph{non-chronological backtracking};
    \item \texttt{WME-CB:} Weighted model enumeration with
    \emph{chronological backtracking}.
\end{itemize}

For the top-$k$ experiments, we also compare against MaxSAT-based
enumeration baselines, namely the top-$k$ versions of \texttt{RC2} and
\texttt{MaxHS} discussed in the 2020 competition. We use the latest available versions of both tools, with the configurations described for the MaxSAT top-$k$ setting. For \texttt{RC2}, we use its public \texttt{enumerate()} interface. For MaxHS, the top-$k$ variant was released for the 2020 MaxSAT Evaluation. 
However, the currently available distribution does not provide a documented
way to invoke this mode. In particular, the accompanying helper script does not
expose usable invocation information through its help interface, preventing a
reliable reproduction of the competition-specific wrapper. We therefore reconstructed the procedure according to the competition description: after each optimum is found, we add a blocking clause for the returned solution and re-optimize under the accumulated blocking clauses.
%

\subsection{Benchmarks}




We evaluate weighted model enumeration on five benchmark families covering random, weighted-counting,
and MaxSAT-derived instances. All instances are provided in the WME codebase.

The first family, \texttt{rnd3sat-1.5}, consists of 525 random 3-CNF formulae
with 25 to 45 variables and clause-to-variable ratio~$1.5$. This low-density regime is commonly used in AllSAT evaluations because it yields many satisfying assignments while remaining non-trivial for exhaustive enumeration~\cite{bayardo1997using,spallitta2025disjoint}. These instances
typically admit many satisfying assignments, making them suitable for evaluating
threshold-based enumeration and for isolating the effect of weight-based
pruning. The second family, \texttt{uf200-860}, contains 100 SATLIB random
3-CNF instances with 200 variables and ratio~$4.28$, a standard hard SAT regime
used here as a stress test for weighted pruning. Literal weights for both random
families are sampled from~$(0,1)$.

The third family, \texttt{bayes}, consists of 389 literal-weighted CNF instances
from~\cite{phan2022dpo}, derived from Bayesian networks
in~\cite{sang2005performing}. These benchmarks are commonly used for weighted
model counting and MPE inference, and therefore naturally induce a ranking of
satisfying assignments by weight. We also include instances from the 2020
Weighted Model Counting Competition, denoted \texttt{wmc-comp}. We use the 2020
set to match the year of the MaxSAT top-$k$ competition and to avoid later
benchmark families with more articulated backbone structure, which could
disproportionately affect the MaxSAT-based baselines.

Finally, \texttt{maxsat-comp} contains a subset of instances from the MaxSAT 2020
top-$k$ competition. Since WME assigns weights to literals, while MaxSAT
instances may contain arbitrary soft clauses, not all MaxSAT benchmarks are
directly convertible to our setting. We therefore retain only instances whose
soft clauses are all unit clauses. Among the 70 available top-$k$ instances,
39 satisfy this condition and are converted into literal-weighted WME instances.

\subsection{Top-$k$ Enumeration}


\begin{figure}[!t]
  \centering

  \begin{subfigure}[t]{0.44\linewidth}
    \centering
    \includegraphics[width=\linewidth]{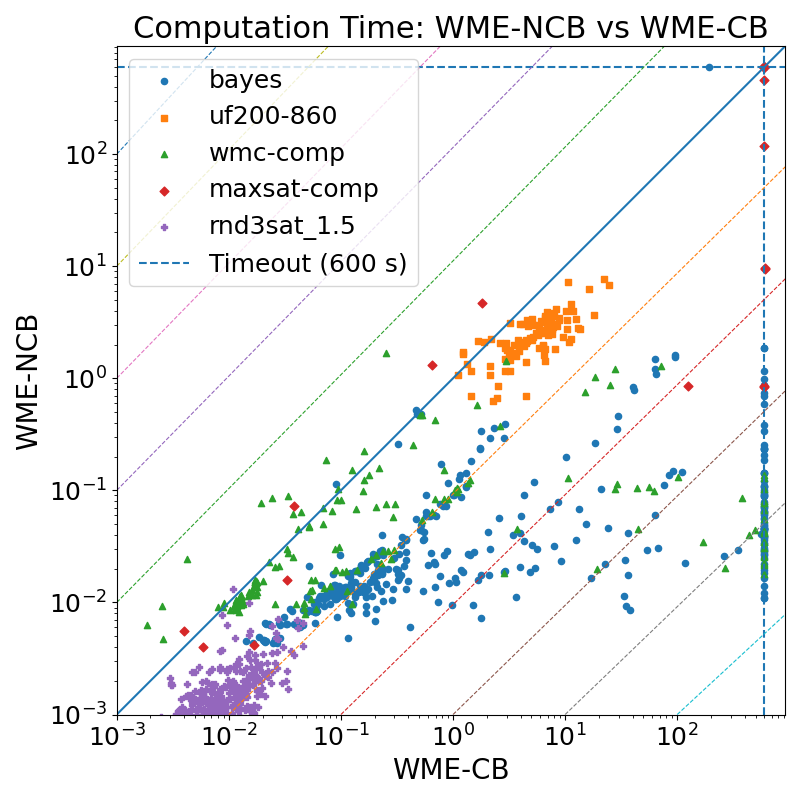}
    \caption{\texttt{WME-NCB} vs. {\tt WME-CB}}
    \label{fig:topk-dpo}
  \end{subfigure}
  \hfill
  \begin{subfigure}[t]{0.44\linewidth}
    \centering
    \includegraphics[width=\linewidth]{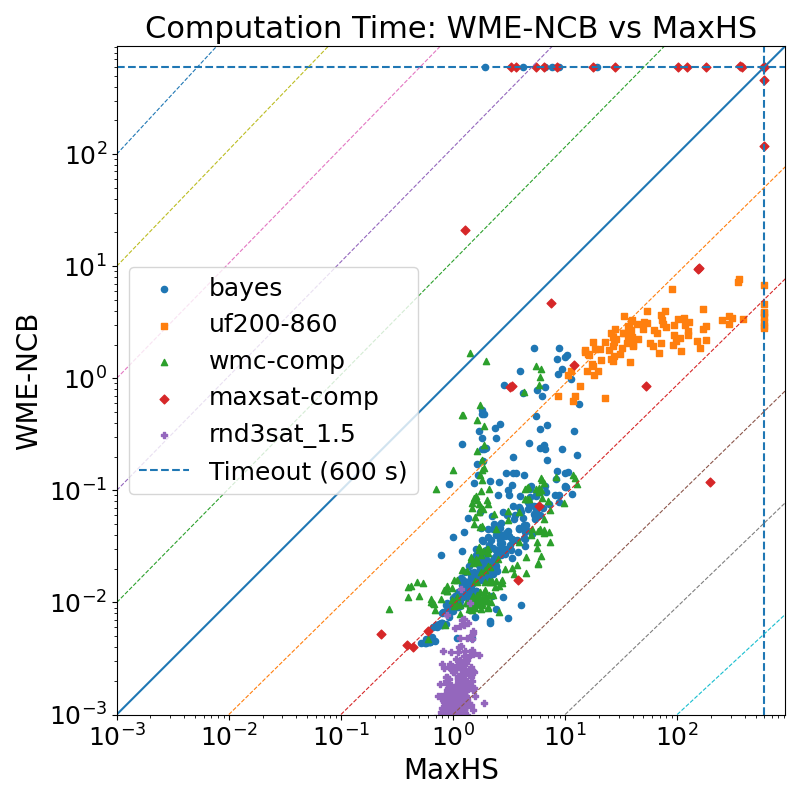}
    \caption{\texttt{WME-NCB} vs. {\tt MaxHS}}
    \label{fig:topk-maxhs}
  \end{subfigure}
  \hfill
  \begin{subfigure}[t]{0.44\linewidth}
    \centering
    \includegraphics[width=\linewidth]{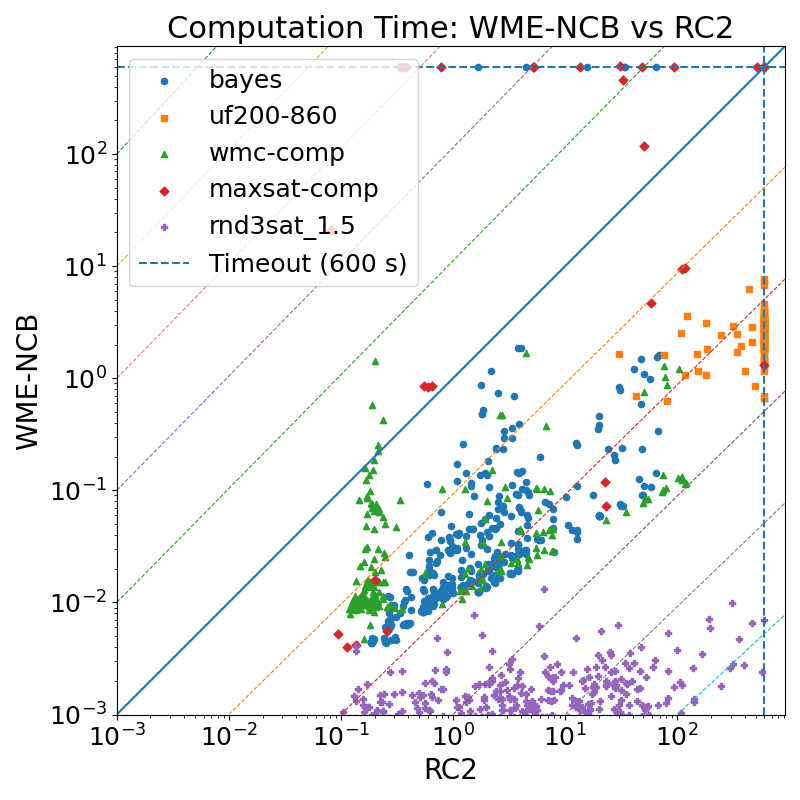}
    \caption{\texttt{WME-NCB} vs.\ {\tt RC2}}
    \label{fig:topk-overhead}
  \end{subfigure}

  \caption{Top-$50$ enumeration experiments.}
  \label{fig:topk-experiments}
\end{figure}

We first evaluate the \emph{top-$50$} task, where the goal is to enumerate the $50$ highest-weight models within the time limit (top-50 was one of the settings considered in the MaxSAT top-$k$ competition).

Figure~\ref{fig:topk-experiments} compares \texttt{WME-NCB} against \texttt{WME-CB}, \texttt{MaxHS}, and \texttt{RC2}. Overall, \texttt{WME-NCB} is the most robust WME configuration. Compared with \texttt{WME-CB}, \texttt{WME-NCB} benefits from non-chronological backtracking, explicit blocking clauses, and restarts, which allow the solver to move more aggressively across the search space after promising regions have been explored. 

The comparison with \texttt{MaxHS} and \texttt{RC2} is more nuanced. On several benchmark families, especially \texttt{wmc-comp}, \texttt{bayes}, and \texttt{rnd3sat-1.5}, \texttt{WME-NCB} is competitive and often faster. This supports the main motivation of our approach: once multiple high-weight solutions must be produced, integrating weight reasoning directly inside the enumeration procedure can avoid the repeated optimization-and-blocking pattern used by MaxSAT-based approaches.

On the \texttt{maxsat-comp} family, however, the picture is mixed. Both \texttt{MaxHS} and \texttt{RC2} perform very well on many of these instances, whereas \texttt{WME-NCB} exhibits several timeouts. We attribute this behavior to the structure of these MaxSAT-derived benchmarks. In particular, many of the best solutions have very similar, and in some cases identical, weights. In this regime, the top-$50$ cutoff does not become sufficiently selective: once the current bound reaches a value shared by many assignments, weight-based pruning cannot discard large parts of the remaining search space without risking the loss of tied top-$k$ solutions. By contrast, specialized MaxSAT solvers can exploit their highly optimized core-guided and branch-and-bound machinery on these instances.

Table~\ref{tab:top50-extended} reports, for each benchmark family, the number of solved instances and the number of instances on which each solver is the fastest. The results further suggest that the effectiveness of top-$k$ weighted enumeration depends not only on the number of requested models, but also on the distribution of model weights: when the weight spectrum is well-separated, the cutoff quickly becomes informative; when many top solutions are tied or nearly tied, pruning is weaker.

\begin{table*}[]
\centering
\scriptsize
\resizebox{\textwidth}{!}{%
\begin{tabular}{|l|cc|cc|cc|cc|cc|}
\hline
Solver & \multicolumn{2}{c|}{\texttt{maxsat-comp} (of 39)} & \multicolumn{2}{c|}{\texttt{wmc-comp} (of 200)} & \multicolumn{2}{c|}{\texttt{bayes} (of 389)} & \multicolumn{2}{c|}{\texttt{uf200-860} (of 100)} & \multicolumn{2}{c|}{\texttt{rnd3sat-1.5} (of 525)} \\
\cline{2-11}
 & Solved & Faster & Solved & Faster & Solved & Faster & Solved & Faster & Solved & Faster \\
\hline
\texttt{MaxHS} & 31 & 4 & 200 & 0 & 389 & 5 & 92 & 0 & 525 & 0 \\
\texttt{RC2} & 30 & 10 & 200 & 4 & 389 & 1 & 23 & 0 & 525 & 0 \\
\texttt{WME-CB} & 19 & 14 & 181 & 19 & 310 & 3 & 100 & 6 & 525 & 1 \\
\texttt{WME-NCB} & 21 & 9 & 200 & 177 & 383 & 380 & 100 & 94 & 525 & 524 \\
\hline
\end{tabular}%
}
\caption{Top-50 enumeration statistics across benchmark families. Notice that 2 problems from {\tt maxsat-comp} were solved by no solver.}
\label{tab:top50-extended}
\end{table*}

\subsection{Threshold-Based Enumeration}

\begin{figure}[t]
\centering
\begin{minipage}[c]{0.45\textwidth}
  \centering
  \includegraphics[width=\linewidth]{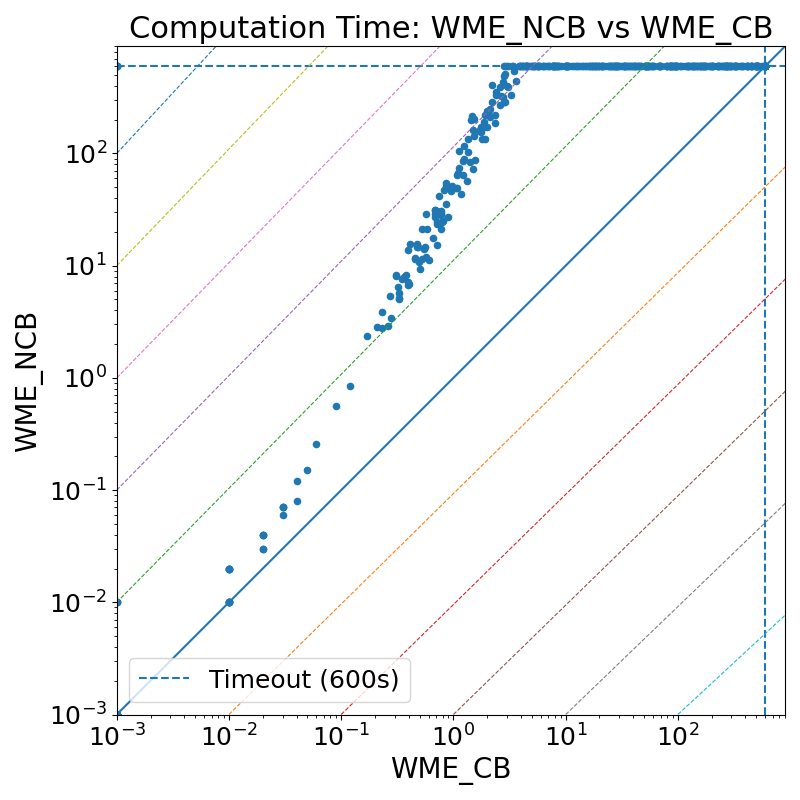}
\end{minipage}
\label{fig:thr-time}
\hfill
\begin{minipage}[c]{0.45\textwidth}
  \centering
  \begin{tabular}{l|c|c}
        \toprule
        Solver & Solved (out of 525) & PAR2 (s) \\
        \midrule
        \texttt{WME-NCB} & 356 & 457.5 \\
        \texttt{WME-CB}  & \textbf{450} & \textbf{267.5} \\
        \bottomrule
        \end{tabular}
  \captionof{table}{Timeout statistics for thresholded WME framework \\variants on \texttt{rnd3sat-1.5}.}
\end{minipage}
\caption{Threshold-based enumeration evaluation among the two WME frameworks on the rnd3sat-1.5.}
\label{fig:thr-time-vs-table}
\end{figure}

We next consider \emph{threshold-based} enumeration. We focus on the \texttt{rnd3sat-1.5} benchmarks, which feature a larger solution space and are standard for evaluating enumeration performance~\cite{bayardo1997using, spallitta2025disjoint}. For a formula with $n$ variables, we set the threshold $\theta$ to $0.5^n$, retaining a non-trivial fraction of solutions above it. Since this setting typically requires enumerating many models, we use a timeout of 600 seconds.

Figure~\ref{fig:thr-time-vs-table} reports the runtime comparison. Unlike in the top-$k$ case, \texttt{WME-CB} outperforms \texttt{WME-NCB}, especially on larger instances with many feasible models. This is expected, as non-chronological search accumulates blocking clauses that increasingly hinder propagation. These results are consistent with prior observations in unweighted enumeration~\cite{toda2016implementing, spallitta2025disjoint}.

\subsection{Ablation Studies}

\begin{figure}[!t]
  \centering
  \begin{subfigure}[t]{0.4\linewidth}
    \centering
    \includegraphics[width=\linewidth]{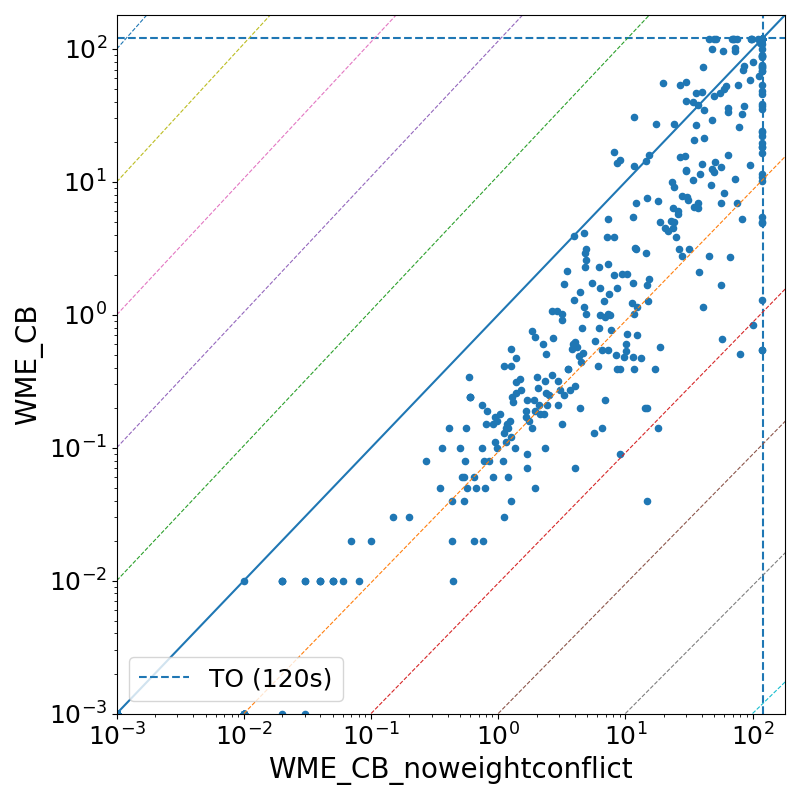}
    \label{fig:thr-ablation}
  \end{subfigure}
  \hfill
  \begin{subfigure}[t]{0.4\linewidth}
    \centering
    \includegraphics[width=\linewidth]{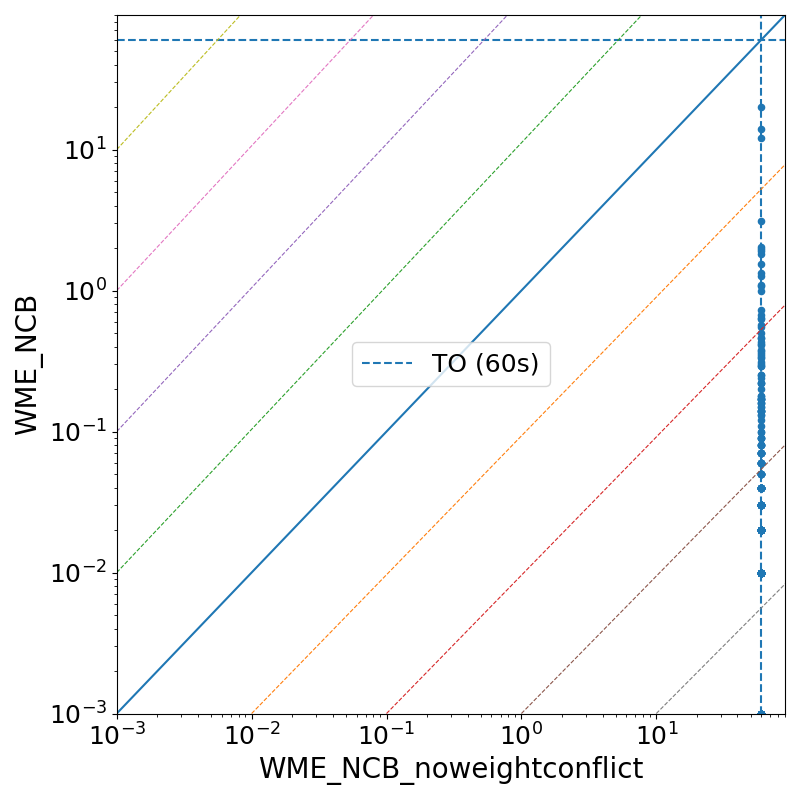}
    \label{fig:topk-ablation}
  \end{subfigure}
  \caption{Ablation studies. \textbf{Left:} threshold-based enumeration. \textbf{Right:} top-$50$ enumeration.}
  \label{fig:comparison_plots_ablation}
\end{figure}

Since WME integrates weight reasoning directly into the CDCL loop, it is natural to ask whether this integration is essential, or whether similar performance can be obtained by a pure AllSAT-style traversal followed by post hoc weight filtering. To this end, we compare our best-performing configuration with variants in which weight-based pruning is disabled.

For top-$50$ enumeration, we evaluate {\tt WME-NCB} on the challenging \texttt{bayes} benchmarks. For threshold-based enumeration, we consider {\tt WME-CB} on the \texttt{rnd3sat-1.5} instances. In both cases (Figure \ref{fig:comparison_plots_ablation}), enabling weight-based conflict analysis achieves a substantial speedup, confirming that early weight-driven pruning is essential under weight constraints. 

\section{Conclusion}

We introduced WME, a CDCL-based framework for
enumerating models of a formula $F$ under weight-based criteria. The framework combines weight-conflict analysis for early pruning with residual-aware backtracking, allowing the search to resume only from states that may still lead to improved solutions.

Our empirical results show that the choice between chronological and non-chronological backtracking is regime-dependent. In top-$k$ settings, where the bound typically tightens quickly, non-chronological backtracking with blocking clauses often reaches high-weight solutions faster. In threshold-based enumeration, where many feasible models may need to be generated, chronological backtracking avoids the additional propagation burden of blocking clauses and is often preferable. Overall, the most effective design depends on the interaction between bound tightening, feasible-region density, and propagation overhead.

One natural step is to make the solver adaptive, selecting between
chronological and non-chronological reasoning based on structural
properties of the instance and of the associated weight distribution. A second direction is to extend WME with projection and SMT reasoning, enabling weighted enumeration over variables of interest and under richer background theories. More broadly, these ideas may also be applied in future CDCL-based Weighted Model Integration algorithms to reduce the enumeration time \cite{spallitta2022smt, spallitta2024enhancing}.

\section*{Declaration on Generative AI}
 During the preparation of this work, the authors used Grammarly for grammar and spelling checks. After using this tool, the authors reviewed and edited the content as needed and took full responsibility for the publication’s content. 

\bibliography{sample-1col}

\appendix

\end{document}